\definecolor{mygrey}{gray}{0.35}
\definecolor{myblue}{rgb}{0.2,0.2,0.8}
\definecolor{myzard}{cmyk}{0,0,0.05,0}
\definecolor{mywhite}{rgb}{1,1,1}
\definecolor{myred}{rgb}{0.9,0.1,0.}
\definecolor{dgreen}{rgb}{0.0, 0.5, 0.0}
\newtheorem{theorem}{Theorem}
\newtheorem{lemma}[theorem]{Lemma}
\newtheorem{definition}[theorem]{Definition}
\newtheorem{corollary}[theorem]{Corollary}
\newcommand{\ket}[1]{\lvert#1\rangle} 
\newcommand{\ketbra}[2]{\ensuremath{\lvert#1\rangle\!\langle#2\rvert}}
\newcommand{\m}[1]{\mathsf{#1}}
\def\mbig{\bBigg@{1.2}}
\def\mbigl#1{\mathopen{\bBigg@{1.2}#1}}
\def\mbigr#1{\mathclose{\bBigg@{1.2}#1}}
\renewcommand*\env@matrix[1][*\c@MaxMatrixCols c]{%
   \hskip -\arraycolsep%
   \let\@ifnextchar\new@ifnextchar%
   \array{#1}%
}
\newcommand*\bgcol[2]{{%
   \ifmmode%
      \mathchoice%
         {\colorbox{#1}{$\displaystyle#2$}}%
         {\colorbox{#1}{$\textstyle#2$}}%
         {\colorbox{#1}{$\scriptstyle#2$}}%
         {\colorbox{#1}{$\scriptscriptstyle#2$}}%
   \else%
      \colorbox{#1}{#2}%
   \fi%
}}
\def\one{\sigma_0}
\def\C{{\ensuremath{\mathbb C}}}
\def\N{{\ensuremath{\mathbb N}}}
\DeclareMathOperator\tr{Tr}
\def\ii{\mathrm i}
\def\ee{\mathrm e}
\def\d{\mathop{}\!\mathrm d\mathchoice{}{}{\kern-.09em}{\kern-.09em}}
\newcommand*\D[2][]{\frac{\d #1}{\d #2}}
\def\peq{\mathrel{\phantom{=}}}
\newcommand*\mat[1]{\begin{pmatrix} #1 \end{pmatrix}}
\def\sx{\sigma_1}
\def\sy{\sigma_2}
\def\sz{\sigma_3}
\def\rhoS{\rho_{\mathrm s}}
\def\gen{\mathcal L}
\def\map{\mathcal E}
\def\bound{\mathcal B}
\def\hilb{\mathcal H}
\def\order{\mathcal O}
\def\PP{_{\mathrm{p p}\vphantom j}}
\def\PC{_{\mathrm{p c}\vphantom j}}
\def\CP{_{\mathrm{c p}\vphantom j}}
\def\CC{_{\mathrm{c c}\vphantom j}}
\def\cL{\mathcal L}
\def\compose{\mathbin\circ}
\let\Re\relax
\DeclareMathOperator\Re{Re}
\let\Im\relax
\DeclareMathOperator\Im{Im}
\pgfplotsset{compat=1.14}
\begin{document}
   \title{Accessible coherence in open quantum system dynamics}

   \author{Mar{\'i}a Garc{\'i}a D{\'i}az}
   \affiliation{F{\'i}sica Te\`{o}rica: Informaci{\'o} i Fen\`{o}mens Qu\`{a}ntics, Departament de F{\'i}sica, Universitat Aut\`{o}noma de Barcelona, 08193 Bellaterra (Barcelona), Spain}
   \orcid{0000-0002-4175-4090}
   \listcsgadd{author1affiliations}{\ref{fn:author}}
   
   \author{Benjamin Desef}
   \affiliation{Institute  of  Theoretical  Physics and IQST,  Universit{\"a}t  Ulm,  Albert-Einstein-Allee  11, D-89069  Ulm,  Germany}
   \orcid{0000-0003-2083-7820}
   \listcsgadd{author2affiliations}{\ref{fn:author}}
   
   \author{Matteo Rosati}
   \affiliation{F{\'i}sica Te\`{o}rica: Informaci{\'o} i Fen\`{o}mens Qu\`{a}ntics, Departament de F{\'i}sica, Universitat Aut\`{o}noma de Barcelona, 08193 Bellaterra (Barcelona), Spain}
   \orcid{0000-0002-8972-2936}
   
   \author{Dario Egloff}
   \affiliation{Institute  of  Theoretical  Physics and IQST,  Universit{\"a}t  Ulm,  Albert-Einstein-Allee  11, D-89069  Ulm,  Germany}
   \affiliation{Institute  of  Theoretical  Physics,  Technical University Dresden, D-01062 Dresden, Germany}
   \orcid{0000-0002-7874-0258}
   
   \author{John Calsamiglia}
   \affiliation{F{\'i}sica Te\`{o}rica: Informaci{\'o} i Fen\`{o}mens Qu\`{a}ntics, Departament de F{\'i}sica, Universitat Aut\`{o}noma de Barcelona, 08193 Bellaterra (Barcelona), Spain}
   \orcid{0000-0003-1735-1360}
   
   \author{Andrea Smirne}
   \affiliation{Institute  of  Theoretical  Physics and IQST,  Universit{\"a}t  Ulm,  Albert-Einstein-Allee  11, D-89069  Ulm,  Germany}
   \affiliation{Dipartimento di Fisica Aldo Pontremoli, Universit\`{a} degli Studi di Milano, via Celoria 16, 20133 Milan, Italy}
   \orcid{0000-0003-4698-9304}
   
   \author{Michalis Skotiniotis}
   \affiliation{F{\'i}sica Te\`{o}rica: Informaci{\'o} i Fen\`{o}mens Qu\`{a}ntics, Departament de F{\'i}sica, Universitat Aut\`{o}noma de Barcelona, 08193 Bellaterra (Barcelona), Spain}
   \orcid{0000-0001-6935-7460}
   
   \author{Susana F. Huelga}
   \affiliation{Institute  of  Theoretical  Physics and IQST,  Universit{\"a}t  Ulm,  Albert-Einstein-Allee  11, D-89069  Ulm,  Germany}
   \orcid{0000-0003-1277-8154}
   
   \begin{abstract}
      Quantum coherence generated in a physical process can only be cast as a potentially useful resource if its effects can be detected at a later time.
      Recently, the notion of non\hyp coherence\hyp generating\hyp and\hyp detecting (NCGD) dynamics has been introduced and related to the classicality of the statistics associated with sequential measurements at different times.
      However, in order for a dynamics to be NCGD, its propagators need to satisfy a given set of conditions for \emph{all} triples of consecutive times.
      We reduce this to a finite set of $d(d-1)$ conditions, where $d$ is the dimension of the quantum system, provided that the generator is time\hyp independent.
      Further conditions are derived for the more general time\hyp dependent case.
      The application of this result to the case of a qubit dynamics allows us to elucidate which kind of noise gives rise to non\hyp coherence\hyp generation\hyp and\hyp detection.
   \end{abstract}

   \maketitle
   
   {\def\thefootnote{*}
    \footnotetext{\label{fn:author}These authors contributed equally.}}

   \section{Introduction}\label{sec:introduction}
      Much experimental effort in quantum physics focuses on the creation, maintenance, and subsequent detection of coherent superpositions of quantum states~\cite{Ronnow2014, Campbell2017, Preskill2018}, a distinctive feature of the quantum formalism that furnishes significant advantage in many communication~\cite{Gisin2007}, computation~\cite{Galindo2002, Georgescu2014}, and metrological tasks~\cite{Giovannetti2011,Toth2014,Degen2017}.
      Recently, resource theories for coherent superpositions have been developed in order to provide a cohesive and quantitative description of quantum coherence, both for states~\cite{Aberg2006, Baumgratz2014, Streltsov2017, Winter2016, Egloff2018} as well as operations~\cite{Bendana2017, Garciadiaz2018, Theurer2019, Liu2019, Gour2019}.
      
      Assessing the coherence capabilities of quantum dynamical maps is a subtle task.
      For example, the mere ability of a quantum dynamics to generate or detect coherence is of no practical advantage unless this coherence can be harnessed in a beneficial way for some task.
      In this sense, a prerequisite for a quantum dynamical evolution to generate \emph{resourceful} coherence for a given task is that the coherence it generates can be detected in terms of discriminable statistics of subsequent measurement outcomes associated with this task.
      In addition, in \cite{Smirne2017} it has been shown that, under proper conditions, dynamics which are non\hyp coherence\hyp generating\hyp and\hyp detecting (NCGD) are strictly related to the classicality of the statistics associated with sequential measurements.
      
      In this work we consider an open\hyp system dynamics and propose definite criteria to assess whether it is able to generate and detect coherences.
      Our main result shows that for dynamical maps stemming from time\hyp independent generators, the---in principle countless---conditions of~\cite{Smirne2017} reduce to a \emph{finite} set of necessary \emph{and} sufficient conditions.
      In addition, those are given in terms of the \emph{generator} of the quantum dynamical evolution.
      We then extend the notion of NCGD to time\hyp dependent dynamics.
      In this case, we can also derive relations between NCGD and the generator, though now in general we need uncountably infinitely many conditions.
      
      The article is structured as follows: We review the necessary background on open\hyp system dynamics (Sec.~\ref{oqs}) as well as the concept of NCGD (Sec.~\ref{ncgd_subsection}) and its connection with the statistics of sequential measurements.
      Sec.~\ref{main_section} contains our main results elaborating on when dynamical evolutions are NCGD based on their generators, and in Sec.~\ref{qubit_ncgd} we exemplify our criteria in the context of a Ramsey protocol widely used in precision spectroscopy.
      For ease of exposition, we defer the proofs of all theorems and propositions to the appendices.
     
   \section{Background}\label{background}
      \subsection{Open quantum systems}\label{oqs}
         A realistic description of a quantum system must take into account that every system is open, i.e., it interacts with the surrounding environment~\cite{Breuer2002}.
         In many circumstances, it is possible to provide such a description by a time\hyp local quantum master equation~(QME):
         \begin{equation}\label{eq:gener}
            \D{t} \rhoS(t) = \gen(t)[\rhoS(t)]\text,
         \end{equation}
         where $\gen(t)$ is the \emph{dynamical generator} of the evolution and $\rhoS$ the reduced state of the system.
         Any $\gen(t)$ that is both trace- and hermiticity\hyp preserving can be uniquely decomposed as~\cite{Gorini1976}
         \begin{equation}
            \begin{aligned}
               \gen(t)[\rhoS]
               & = -\ii[H(t), \rhoS] \\
               & \peq + \sum_{i,\kern.4pt j=1}^{d^2 -1} D_{ij}(t) \bigl(
                      F_i^{\vphantom\dagger} \rhoS^{\vphantom\dagger} F_j^\dagger -
                      \frac{1}{2}\{F_j^\dagger F_i^{\vphantom\dagger},
                                   \rhoS^{\vphantom\dagger}\}
                   \bigr)\text,
            \end{aligned}
            \label{eq:Lindblad_gen}
         \end{equation}
         where $d < \infty$ is the dimension of the Hilbert space of the system $\hilb$, $H(t)$ a Hermitian operator, $D(t)$ a Hermitian matrix, and $\{F_i^{}\}_{i=1}^{d^2}$ is an orthonormal operator basis with $F_{d^2} = \mathbbm{1}/\sqrt{d}$ and $\tr(F_i^\dagger F_j^{}) = \delta_{ij}$.
         
         Upon integration, the QME leads to a family of trace\hyp preserving~(TP) propagators $\map_{t_2, t_1}$ satisfying $\rhoS(t_2) = \map_{t_2, t_1}[ \rhoS(t_1) ]\ \forall t_2 \geq t_1 \geq 0$.
         
         In particular, we will first address dynamics for which $H$ and $D$ are time\hyp independent.
         This class gives rise to semigroups of the form $\map_t = \ee^{t \gen}$, where $\gen(t) \equiv \gen$ is time\hyp independent.
         Its most prominent representative is the Gorini\hyp Kossakowski\hyp Sudarshan--Lindblad (GKSL) form \cite{Gorini1976,Lindblad1976}, additionally demanding $D \geq 0$, which enforces complete positivity (CP) on all propagators.
         Despite the fact that a derivation of this form via a microscopic model involves several approximations \cite{Breuer2002}---which is by no means the only route to a GKSL QME \cite{Englert2002,Rivas2012,Breuer2002,Gorini1978,Kirsanskas2018}---it has shown remarkable success and applicability, in particular in the fields of quantum optics \cite{Walls2008} and in relevant problems of interest in solid\hyp state physics, like non\hyp equilibrium transport of charge and energy \cite{Brandes2005}.
         
         For even more generality, we will however also consider the case of a time\hyp dependent generator.
         
         Finally we shall refer to \emph{rank-$k$ noise} as those dynamical generators $\gen$ in Eq.~\eqref{eq:Lindblad_gen} for which $D$ has at most $k$ non-zero eigenvalues.
         
      \subsection{Non-coherence-generating-and-detecting dynamics}\label{ncgd_subsection}
         The realization that quantum coherence underpins the performance of many quantum information and communication tasks elevates it to a \emph{bona fide} resource.
         Over the past decade, a great deal of effort has gone into developing resource theories of coherence in an attempt to quantify and better leverage its use, both for states~\cite{Aberg2006, Baumgratz2014, Streltsov2017, Winter2016, Egloff2018} as well as operations~\cite{Bendana2017, Garciadiaz2018, Theurer2019, Liu2019, Gour2019}.
         Within these frameworks, the set of free states, $\mathcal I$, consists of all states that are diagonal in some fixed basis $\{ \ket{i} : i = 1, \dotsc, \dim(\hilb) \}$, while free operations are those that do not generate coherence out of incoherent states.
         A plethora of non\hyp coherence\hyp generating operations has been proposed~\cite{Chitambar2016a}, among which maximally incoherent operations (MIOs), defined as all CPTP maps $\mathcal M\colon \bound(\hilb) \to \bound(\hilb)$ such that $\mathcal M(\mathcal I) \subset \mathcal I$, constitute the largest class~\cite{Aberg2006}.
         Here, $\bound(\mathord\cdot)$ denotes the set of bounded operators.
         
         From the above definition of free operations it would seem that the only desideratum for a resourceful quantum operation is its ability to generate coherence.
         However, coherence in itself is of no value unless we are able to subsequently harness its presence in a beneficial manner~\cite{Theurer2019}.
         In order to do so, a dynamical map $\map_{t_1, 0}$ that generates coherence at some time $t_1$ must, at some later time $t_2> t_1$, be able to detect coherence (note that sometimes the word `activate'~\cite{liu2017} is used as a synonym for `detect').
         Therefore, what we are interested in are the coherence\hyp generating\hyp and\hyp detecting properties of the propagators $\{ \map_{t_2, t_1} : t_2 \geq t_1 \geq 0 \}$ associated with the dynamics.
         To that end we define non\hyp coherence\hyp generating\hyp and\hyp detecting (NCGD) dynamics as follows:
         
         \begin{definition}\cite{Smirne2017}
            A dynamics with propagator $\map_{t_2, t_1}$, $t_2 \geq t_1\geq 0$, is NCGD iff the condition
            \begin{equation}\label{ncgd_def}
               \Delta \compose \map_{t_3, t_2} \compose \Delta \compose \map_{t_2, t_1} \compose \Delta
               = \Delta \compose \map_{t_3, t_1} \compose \Delta
            \end{equation} 
            holds for all times $t_3 \geq t_2 \geq t_1 \geq 0$, where $\compose$ denotes composition of maps and $\Delta = \sum_{i = 1}^d \ketbra ii \cdot \ketbra ii$ is the complete dephasing map in the incoherent basis $\{ \ket i \}_{i = 1}^d$. \\
            Otherwise the dynamics is coherence\hyp generating\hyp and\hyp detecting (CGD).
            \label{def:NCGD}
         \end{definition}
        
        By looking at the statistics of measurement outcomes in the incoherent basis, it can be verified that CGD evolutions produce accessible coherence, i.e., coherence that affects the statistics of sequential measurements.
        In fact, under the assumptions that the dynamics of a quantum system is given by a CP semigroup and that the Quantum Regression Theorem holds~\cite{Lax1968, Swain1981, Carmichael1993, Breuer2002, Guarnieri2014}, the joint probability distribution arising from sequentially measuring a non\hyp degenerate observable of the quantum system is compatible with a classical stochastic process if and only if the dynamics is NCGD~\cite{Smirne2017}.
        The creation and subsequent detection of quantum coherence is thus, in this case, the distinctive feature of any genuine non\hyp classicality in the process.
        Note that here by classical process we mean any process whose statistics satisfies the Kolmogorov consistency conditions~\cite{Breuer2002, Feller1971}, motivated by the fact that---at least in principle---classical physics allows for noninvasive measurability.
        Violations of Leggett--Garg\hyp type inequalities \cite{Leggett1985,Emary2014,Huelga1995} can only be observed if a dynamics is CGD \cite{Smirne2017}, giving additional justification to this identification.
        Importantly, the notion of (N)CGD dynamics provides a theoretical framework that is amenable to deriving (quantitative) experimental benchmarks of coherence and its connection with non-classicality, as exemplified by the assessment of the time-multiplexed optical quantum walk in \cite{Smirne2019}.
        
        The direct connection between the dynamics of quantum coherence and non\hyp classicality can be extended beyond the case of CP semigroups, but does not hold for general evolutions~\cite{Strasberg2019, Milz2019}.
        
        Finally, we stress that the class of NCGD dynamical maps is not closed under composition: indeed, it is very easy to obtain CGD dynamics by composing two NCGD dynamical maps; the first being coherence\hyp generating but not detecting while the second one being coherence\hyp detecting but not generating.

   \section{Characterizing NCGD dynamics}\label{main_section}
      Given a quantum dynamics, Definition~\ref{def:NCGD} can be used to assess whether detectable coherences are generated.
      In principle, this would require performing three separate map tomography protocols for \emph{any} three given instants of time $t_1$, $t_2$, and $t_3$, and reconstruct the propagators involved in Eq.~\eqref{ncgd_def}, which would involve an uncountably infinite number of measurements.
      
      In this section we provide a finite set of necessary and sufficient conditions certifying the CGD properties of time\hyp independent dynamical evolutions.
      Importantly, these conditions pertain directly to the generator of the dynamics in Eq.~\eqref{eq:gener}.
      As the latter is the central tool to yield an explicit description, microscopically or phenomenologically motivated, of the open\hyp system dynamics of concrete physical settings~\cite{Breuer2002}, our result is relevant to certify the use of coherence in an open quantum system dynamics.
      For time\hyp dependent generators, we provide an infinite number of necessary and sufficient conditions on the generator.
      
      Recall that $\gen(t)\colon \bound(\hilb) \to \bound(\hilb)$ for all times $t$.
      Having fixed the complete dephasing superoperator $\Delta$, we can decompose $\bound(\hilb)$ into two orthogonal subspaces
      \begin{subequations}
         \begin{gather}
            \bound(\hilb)
              = \bound_{\mathrm p}(\hilb) \oplus
                \bound_{\mathrm c}(\hilb)\text,
            \shortintertext{where}
            \begin{aligned}
               \bound_{\mathrm p}(\hilb)
               & = \operatorname{Image}(\Delta)\text, \\
               \bound_{\mathrm c}(\hilb)
               & = \operatorname{Kernel}(\Delta)\text,
            \end{aligned}
         \end{gather}
         are the subspaces associated with the population and coherence basis elements respectively.
         In this basis, the matrix representation of the generator $\gen(t)$ is given by
         \begin{equation}\label{eq:gent}
            \gen(t)
              = \mat{\gen\PP(t) & \gen\PC(t) \\
                     \gen\CP(t) & \gen\CC(t)}\!\text,
         \end{equation}
         where, for example, $\gen\PC\colon \bound_{\mathrm c}(\hilb)\to\bound_{\mathrm p}(\hilb)$.
         \end{subequations}
      
      We are now ready to formulate our first result, which is a complete characterization of NCGD based solely on its time\hyp independent generator.
      \begin{theorem}
         For any\label{th:semi} time\hyp independent generator $\gen$ of a quantum dynamics it holds that
         \begin{equation*}
            \text{NCGD}
            \Leftrightarrow \mbigl(
               \gen\PC^{\vphantom j} \gen\CC^{j}\gen\CP^{\vphantom j} = 0
               \ \forall j \in \{ 0, \dotsc, d^2 - d -1 \}
            \mbigr)\text,
         \end{equation*}
         where $d = \dim\hilb$.
      \end{theorem}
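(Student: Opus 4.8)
The plan is to exploit the semigroup structure to collapse the continuum of conditions in \eqref{ncgd_def} onto a single algebraic statement about the population block. For a time\hyp independent generator the propagator depends only on the elapsed time, $\map_{t_2,t_1} = e^{(t_2-t_1)\gen}$, so with $a = t_2-t_1$ and $b = t_3-t_2$ the NCGD identity becomes
\[
   \Delta\, e^{b\gen}\,\Delta\, e^{a\gen}\,\Delta = \Delta\, e^{(a+b)\gen}\,\Delta, \qquad \forall\, a,b\ge 0.
\]
Writing $M(t) := e^{t\gen}$ in the block form \eqref{eq:gent} and using that $\Delta$ projects onto $\bound_{\mathrm p}(\hilb)$, so that $\Delta X \Delta$ retains only the $pp$ block of $X$, a direct block multiplication reduces this to $M\PP(b)\,M\PP(a) = M\PP(a+b)$ for all $a,b\ge 0$. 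Hence NCGD is equivalent to the statement that $\{M\PP(t)\}_{t\ge 0}$ is itself a one\hyp parameter semigroup.

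Second, I would identify this semigroup. Differentiating the identity at $b=0$ yields the linear ODE $\dot M\PP(t) = \gen\PP\, M\PP(t)$ with $M\PP(0)=\mathbbm{1}$, whose unique solution is $M\PP(t) = e^{t\gen\PP}$. Comparing the Taylor coefficients of $e^{t\gen\PP}$ with those of the $pp$ block of $e^{t\gen}$, NCGD becomes equivalent to the infinite family of algebraic identities $(\gen^{\,n})\PP = (\gen\PP)^{\,n}$ for all $n\ge 0$.

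Third, I would analyse the left\hyp hand side combinatorially. Expanding $\gen^n$ in the $2\times 2$ block structure, $(\gen^n)\PP$ is a sum over walks on the two nodes $\{p,c\}$ that start and end at $p$: the unique walk never visiting $c$ gives $(\gen\PP)^n$, while every other walk splits into maximal excursions into the coherence sector, each contributing a factor $\gen\PC\,\gen\CC^{\,j}\,\gen\CP$ for some $j\ge 0$, padded by powers of $\gen\PP$. Thus $(\gen^n)\PP = (\gen\PP)^n$ for all $n$ precisely when $\gen\PC\,\gen\CC^{\,j}\,\gen\CP = 0$ for all $j\ge 0$. Sufficiency is immediate since every excursion factor vanishes; necessity follows by induction on $n$, the term $\gen\PC\,\gen\CC^{\,j}\,\gen\CP$ first appearing unpadded at $n=j+2$ once the lower conditions have killed all remaining excursions.

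Finally, I would truncate the family using Cayley--Hamilton: since $\gen\CC$ acts on $\bound_{\mathrm c}(\hilb)$ of dimension $d^2-d$, every power $\gen\CC^{\,j}$ with $j\ge d^2-d$ is a linear combination of $\{\gen\CC^{\,0},\dots,\gen\CC^{\,d^2-d-1}\}$, so sandwiching between $\gen\PC$ and $\gen\CP$ shows the conditions for $j\ge d^2-d$ are implied by those for $j\in\{0,\dots,d^2-d-1\}$, leaving exactly the stated finite set. The step demanding the most care is the necessity argument of the third part: I must check that at $n=j+2$ every multi\hyp excursion and shorter\hyp excursion contribution is already forced to vanish by the inductive hypothesis (using that two or more excursions leave each internal $c$\hyp length strictly below $j$), so that the single identity $\gen\PC\,\gen\CC^{\,j}\,\gen\CP=0$ can be cleanly isolated.
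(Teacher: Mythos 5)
Your proof is correct and arrives at the same finite set of conditions, but it takes a recognizably different route from the paper's in two places. Where you characterize NCGD as the semigroup property of the dephased propagator, i.e.\ $(\ee^{t\gen})\PP = \ee^{t\gen\PP}$ via the ODE at $b=0$, and hence as the one\hyp parameter family $(\gen^n)\PP = (\gen\PP)^n$, the paper instead rewrites NCGD as $\Delta\, \ee^{t\gen}\, \Delta^\perp \ee^{\tau\gen}\, \Delta = 0$ with $\Delta^\perp = \mathbbm{1} - \Delta$, expands both exponentials, and uses linear independence of the monomials $t^n\tau^{n'}$ to obtain the two\hyp parameter family $\Delta \gen^n \Delta^\perp \gen^{n'} \Delta = 0$. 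Correspondingly, your combinatorial walk/excursion expansion of $(\gen^n)\PP$ replaces the paper's inductive structure lemma for $\Delta\gen^n$ and $\gen^n\Delta$, which only ever needs to track single excursions because multi\hyp excursion terms are annihilated at each inductive step. Your reformulation buys a cleaner conceptual statement---NCGD holds iff the populations evolve autonomously under $\ee^{t\gen\PP}$---at the price of the more delicate bookkeeping you correctly flag: one must check that at $n = j+2$ all padded single\hyp excursion and all multi\hyp excursion walks are already killed by the inductive hypothesis, which indeed holds since $k \geq 2$ excursions in a length\hyp$(j+2)$ walk force every internal $c$\hyp length to be at most $j-2$. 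The final Cayley--Hamilton truncation on $\bound_{\mathrm c}(\hilb)$, of dimension $d^2 - d$, is identical to the paper's.
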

      The proof in Appendix~\ref{app:proof} makes use of the decomposition in Eq.~\eqref{eq:gent}.
      It then employs the Cayley\hyp Hamilton theorem~\cite{Segercrantz1992} to reduce the uncountably infinitely many conditions in Eq.~\eqref{ncgd_def} to $d (d-1)$ conditions on the generator.
      
      As a direct generalization of Theorem~\ref{th:semi}, we can state the following Theorem for the more general case of a time\hyp dependent generator, under the regularity condition that the latter is analytic for all times considered.
      Its proof can be found in Appendix~\ref{app:eq}.
      \begin{theorem}
         For a\label{ncgd_eq} sufficiently regular $\gen$(t), we have
         \begin{multline*}
            \text{NCGD}
            \Leftrightarrow \forall t_n \geq \dotsb \geq t_1 \geq 0 \text,
                \ \forall n\geq 2 : \\
              \gen\PC(t_n) \compose \gen\CC(t_{n-1})\compose \dotsb \compose
              \gen\CC(t_2) \compose \gen\CP(t_1) = 0 \text.
         \end{multline*}
      \end{theorem}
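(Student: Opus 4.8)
The plan is to show the NCGD condition \eqref{ncgd_def}, required for all triples of times, is equivalent to an infinite family of conditions on the generator, one for each chain of times and each length $n$. The strategy rests on relating the defining condition to derivatives of the propagator evaluated at coincident times and then peeling off the dephasing maps.

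First I would rewrite the NCGD condition \eqref{ncgd_def} in the block form induced by the decomposition \eqref{eq:gent}. Applying $\Delta$ on the left and right projects onto $\bound_{\mathrm p}(\hilb)$, so the condition becomes a statement about the population-to-population blocks of the composed propagators. Writing $\map_{t_2,t_1}$ in blocks as well, the left-hand side of \eqref{ncgd_def} picks out $(\map_{t_3,t_2})\PP\,(\map_{t_2,t_1})\PP$, whereas the right-hand side is $(\map_{t_3,t_1})\PP$. Thus NCGD is equivalent to demanding, for every $t_3\geq t_2\geq t_1$, that
\begin{equation*}
   (\map_{t_3,t_1})\PP = (\map_{t_3,t_2})\PP\,(\map_{t_2,t_1})\PP\text.
\end{equation*}
The right-hand side is the population block of the \emph{composition} $\Delta\map_{t_3,t_2}\Delta\map_{t_2,t_1}\Delta$, which in general differs from the population block of the full propagator precisely by terms that route through the coherence subspace. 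Expanding $\map_{t_3,t_2}\map_{t_2,t_1}=\map_{t_3,t_1}$ in blocks, the deviation is controlled by the cross-blocks $\gen\PC$ and $\gen\CP$ together with intermediate coherence dynamics $\gen\CC$.

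Next, since the generator is analytic, I would differentiate the equivalent condition with respect to the intermediate and final times and evaluate everything at a common time to convert the integral/propagator identity into an identity on the generator blocks. Taking $\partial_{t_3}$ and $\partial_{t_2}$ and sending $t_3,t_2\to t_1$ produces, at lowest order, the single condition $\gen\PC\,\gen\CP=0$; higher-order derivatives in the intermediate times insert additional factors of $\gen\CC$ between $\gen\PC$ and $\gen\CP$, and derivatives sampling the times at distinct instants produce the time-dependent chain $\gen\PC(t_n)\,\gen\CC(t_{n-1})\dotsm\gen\CC(t_2)\,\gen\CP(t_1)$. Conversely, if all such chains vanish, then every term in the Dyson-series expansion of the deviation vanishes identically, so the propagator identity holds for all times; this gives the reverse implication. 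This is the direct analogue, for genuinely time-dependent generators, of the Cayley--Hamilton truncation used in Theorem~\ref{th:semi}: there the $\gen\CC^j$ powers close at $j=d^2-d-1$, whereas here no such finite closure is available and the full tower of conditions survives.

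The main obstacle I expect is the bookkeeping in the differentiation step, namely verifying that the derivatives of the composed propagators, evaluated at coincident or ordered times, generate \emph{exactly} the chains $\gen\PC(t_n)\,\gen\CC(t_{n-1})\dotsm\gen\CP(t_1)$ and no spurious extra terms, and that vanishing of these chains is not only necessary but sufficient. The delicate point is to separate the contributions arising from differentiating $\map_{t_3,t_2}$ versus $\map_{t_2,t_1}$ and to argue that the mixed terms reorganize into a clean nested composition of coherence propagators flanked by $\gen\PC$ and $\gen\CP$. Analyticity is what licenses reconstructing the full propagator identity from the collection of derivative conditions at a point, so I would be careful to invoke the regularity hypothesis precisely at the stage where the Taylor/Dyson expansion is resummed.
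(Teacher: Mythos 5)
Your reduction of Eq.~\eqref{ncgd_def} to the block identity $\map\PC(t_3,t_2)\compose\map\CP(t_2,t_1)=0$ for all $t_3\geq t_2\geq t_1$ is the right starting point, and your argument for the backward implication---every word in the expansion of this coherence\hyp routed deviation contains a sub\hyp factor $\gen\PC\compose\gen\CC\compose\dotsb\compose\gen\CC\compose\gen\CP$ with ordered times and hence vanishes---is essentially the paper's discretization argument and is sound. The forward direction, however, has a genuine gap. The quantity you propose to differentiate is a function of only \emph{three} time arguments; its derivatives, whether evaluated at coincident or at ordered instants, can only ever yield conditions on the generator at at most three distinct times (by analyticity they merely encode the three\hyp time identity you already have). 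The target condition for $n>2$ involves $n$ distinct times, so differentiating the three\hyp time identity cannot ``produce the time\hyp dependent chain'' directly. The missing step is the one the paper takes first: iterate the NCGD condition to obtain
\begin{equation*}
   \Delta\compose\map(t_n,t_{n-1})\compose\dotsb\compose\map(t_1,t_0)\compose\Delta
   = \Delta\compose\map(t_n,t_{n-1})\compose\Delta\compose\dotsb\compose\Delta\compose\map(t_1,t_0)\compose\Delta\text,
\end{equation*}
an identity in $n+1$ independent time variables, and only then extract generator conditions.

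The second missing ingredient is what resolves the ``bookkeeping'' you flag but do not carry out. Expanding the deviation does not only produce the pure chains $\gen\PC\,\gen\CC\dotsm\gen\CC\,\gen\CP$: it also produces words in which the trajectory re\hyp enters the population subspace between the outer factors, e.g.\ terms containing excursions $\gen\CP(s')\compose\gen\PC(s)$ or $\gen\PP$ insertions, and these cross\hyp terms are not individually zero for free. The paper eliminates them by an induction on $n$ whose hypothesis is precisely the vanishing of all shorter propagator chains $\map\PC\compose\map\CC\compose\dotsb\compose\map\CP$; only after that does the multilinear expansion of each $\map(t_{i+1},t_i)$ in its duration $t_{i+1}-t_i$, with the base points held fixed and distinct, isolate the coefficient $\gen\PC(t_{n-1})\compose\gen\CC(t_{n-2})\compose\dotsb\compose\gen\CP(t_0)$ by linear independence. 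Without both the iteration over $n-1$ intermediate dephasings and this induction, the ``delicate point'' you defer is not bookkeeping but the entire content of the forward implication.
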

      
      Even though this characterization of NCGD, in contrast to the previous case, now consists of infinitely many conditions, in certain special cases some simpler conditions already guarantee that a dynamics is NCGD.
      This is captured in the following Corollary.
      \begin{corollary}
         The\label{ncgd_suff} following conditions on the generator $\gen(t)$ of a quantum dynamics individually imply NCGD.
         \begin{itemize}
            \item $\gen\PC(t) = 0\ \forall t \geq 0$
            \item $\gen\CP(t) = 0\ \forall t \geq 0$
            \item $\gen(t_2) \compose \Delta \compose \gen(t_1)
                   = \gen(t_2) \compose \gen(t_1)\ \forall t_2 \geq t_1 \geq 0$
            \item $\bigl[\gen(t), \Delta\bigr] = 0\ \forall t \geq 0$.
         \end{itemize}
      \end{corollary}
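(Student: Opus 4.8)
The plan is to derive all four implications from the time\hyp dependent characterization of Theorem~\ref{ncgd_eq}, which states that NCGD is equivalent to the vanishing of every chain
\[
   \gen\PC(t_n) \compose \gen\CC(t_{n-1}) \compose \dotsb
   \compose \gen\CP(t_1) = 0
\]
for all $t_n \geq \dotsb \geq t_1 \geq 0$ and all $n \geq 2$. The unifying observation is that, in the splitting $\bound(\hilb) = \bound_{\mathrm p}(\hilb) \oplus \bound_{\mathrm c}(\hilb)$, the dephasing map is the block projector onto the population sector, $\Delta = \left(\begin{smallmatrix} \mathbbm{1} & 0 \\ 0 & 0 \end{smallmatrix}\right)$, so each item of the Corollary can be rewritten in terms of the blocks of $\gen$ and matched against these chains.

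The first two items are immediate. If $\gen\PC(t) = 0$ for all $t$, the leftmost factor $\gen\PC(t_n)$ of every chain vanishes; if $\gen\CP(t) = 0$ for all $t$, the rightmost factor $\gen\CP(t_1)$ vanishes. Either way the whole composition is zero and the dynamics is NCGD. The fourth item reduces to these: writing $[\gen(t), \Delta]$ in block form yields the purely off\hyp diagonal superoperator with entries $-\gen\PC(t)$ and $\gen\CP(t)$, so $[\gen(t), \Delta] = 0$ forces $\gen\PC(t) = \gen\CP(t) = 0$, whence NCGD follows a fortiori.

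The only item requiring genuine work is the third. Substituting the block forms into $\gen(t_2) \compose \Delta \compose \gen(t_1)$ and $\gen(t_2) \compose \gen(t_1)$ and subtracting, I would read off that the hypothesis $\gen(t_2) \compose \Delta \compose \gen(t_1) = \gen(t_2) \compose \gen(t_1)$ is equivalent to the simultaneous vanishing, for all $t_2 \geq t_1$, of the four pairwise products $\gen\PC(t_2)\gen\CP(t_1)$, $\gen\PC(t_2)\gen\CC(t_1)$, $\gen\CC(t_2)\gen\CP(t_1)$ and $\gen\CC(t_2)\gen\CC(t_1)$. The chain for $n = 2$ is then exactly the first product. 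For $n \geq 3$ the key step is that every chain contains an adjacent forbidden pair: its two leftmost factors form $\gen\PC(t_n)\gen\CC(t_{n-1})$, which is the second vanishing product applied at $t_n \geq t_{n-1}$, and this already annihilates the entire composition.

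I expect the third item to be the main, albeit mild, obstacle: the substance lies in the bookkeeping that upgrades a single insertion of $\Delta$ into control over arbitrarily long chains, and the crux is recognizing that the pairwise relation $\gen\PC\,\gen\CC = 0$ is exactly what is needed to collapse every longer product. The remaining items are direct block algebra applied to Theorem~\ref{ncgd_eq}.
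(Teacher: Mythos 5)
Your proof is correct and follows exactly the route the paper intends: the paper states these as consequences of Theorem~\ref{ncgd_eq} without spelling out the derivation, and your block computation (items 1, 2, 4 killing an endpoint of every chain; item 3 being equivalent to the vanishing of the four products $\gen\PC\gen\CP$, $\gen\PC\gen\CC$, $\gen\CC\gen\CP$, $\gen\CC\gen\CC$, of which $\gen\PC(t_n)\gen\CC(t_{n-1})=0$ already annihilates every chain with $n\geq 3$) is the natural way to fill it in. No gaps.
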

      
      It is sufficient to find a single set of times for which the equality in Theorem~\ref{ncgd_eq} does not hold in order to guarantee that the dynamics is CGD.
      The simplest instance of this is captured in the following Corollary.
      \begin{corollary}
         For\label{ncgd_necc} a sufficiently regular $\gen(t)$, we have
         \begin{equation*}
            \text{NCGD}
            \Rightarrow \gen\PC(t_2) \compose \gen\CP(t_1) = 0 \ \forall t_2 \geq t_1 \geq 0\text.
         \end{equation*}
      \end{corollary}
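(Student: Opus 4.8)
The plan is to obtain this statement as the lowest-order instance of Theorem~\ref{ncgd_eq}, rather than re-deriving it from the definition. Concretely, I would specialize the chain condition of that theorem to $n=2$. For $n=2$ the product
\[
   \gen\PC(t_n) \compose \gen\CC(t_{n-1})\compose \dotsb \compose
   \gen\CC(t_2) \compose \gen\CP(t_1)
\]
contains no $\gen\CC$ factors, since the range $\{2,\dotsc,n-1\}$ of the intermediate indices is empty, so it collapses to exactly $\gen\PC(t_2)\compose\gen\CP(t_1)$. Because Theorem~\ref{ncgd_eq} asserts that NCGD is \emph{equivalent} to the vanishing of all such chains for every $n\geq 2$ and every ordered tuple $t_n\geq\dotsb\geq t_1\geq 0$, the forward implication immediately yields $\gen\PC(t_2)\compose\gen\CP(t_1)=0$ for all $t_2\geq t_1\geq 0$, which is the claim. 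The regularity (analyticity) hypothesis is inherited verbatim from Theorem~\ref{ncgd_eq}.

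It is worth recording why a more elementary derivation does \emph{not} already suffice, which is precisely what makes invoking the full theorem the natural route. Writing $\Delta$ as the projector onto $\bound_{\mathrm p}(\hilb)$ and expanding Definition~\ref{def:NCGD} in the block form of Eq.~\eqref{eq:gent}, the NCGD condition reduces to the purely population-block identity
\[
   (\map_{t_3,t_2})\PP\,(\map_{t_2,t_1})\PP = (\map_{t_3,t_1})\PP
   \qquad \forall\, t_3\geq t_2\geq t_1\geq 0\text.
\]
Differentiating this in $t_3$ at $t_3=t_2$, using $\partial_{t_3}\map_{t_3,t_2}=\gen(t_3)\compose\map_{t_3,t_2}$ together with $\map_{t_2,t_2}=\mathrm{id}$, gives the intermediate relation $\gen\PC(t_2)\compose(\map_{t_2,t_1})\CP = 0$ for all $t_2\geq t_1$.

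The main obstacle lies in passing from this relation to a statement at the level of the \emph{generator} at \emph{distinct} times. A single further differentiation in $t_2$, evaluated at $t_2=t_1$, recovers only the \emph{equal-time} condition $\gen\PC(t_1)\compose\gen\CP(t_1)=0$, because there both $(\map_{t_1,t_1})\CP=0$ and $\partial_{t_2}(\map_{t_2,t_1})\CP|_{t_2=t_1}=\gen\CP(t_1)$ are pinned by $\map_{t_1,t_1}=\mathrm{id}$. Resolving $(\map_{t_2,t_1})\CP$ for genuinely separated times $t_2>t_1$ into generator data requires expanding the (time-ordered) propagator, which is exactly the mechanism generating the entire tower of chain conditions in Theorem~\ref{ncgd_eq}. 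Hence the cleanest and most economical proof is to extract the corollary as the $n=2$ member of that tower, and I would present it in that one-line form.
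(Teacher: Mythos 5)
Your proposal is correct and matches the paper's own route: the paper likewise presents this corollary as the simplest instance ($n=2$) of Theorem~\ref{ncgd_eq}, for which the chain of $\gen\CC$ factors is empty and the condition collapses to $\gen\PC(t_2) \compose \gen\CP(t_1) = 0$ for all $t_2 \geq t_1 \geq 0$. Your additional observation---that directly differentiating the NCGD identity only pins down the equal\hyp time product $\gen\PC(t_1)\compose\gen\CP(t_1)$, and that separating the two times requires the propagator expansion underlying Theorem~\ref{ncgd_eq}---is accurate but not needed for the corollary itself.
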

   
   \section{(N)CGD dynamics for qubits}\label{qubit_ncgd}
      In this section, we will apply Theorem~\ref{th:semi} to the special case of a GKSL qubit dynamics.
      This will allow us to explicitly give the structure of an NCGD dynamics.
      
      Eq.~\eqref{eq:Lindblad_gen} in the normalized Pauli operator basis $\{\sigma_i : i = 0, \dotsc, 3 \}$ can be easily rewritten as
      \begin{equation}
         \cL[\rhoS]
         = \frac{1}{2} \sum_{i,\kern.4pt j = 0}^3 \m L_{i j}
              \bigl( [\sigma_i\rhoS, \sigma_j] + [\sigma_i, \rhoS\sigma_j] \bigr)\text,
         \label{eq:qbit_Lindblad}
      \end{equation}
      where $\m L\in\C^{4\times 4}$ is a Hermitian matrix.
      We will choose $(\one, \sz)$ as our incoherent basis and $(\sx, \sy)$ as the coherent one.
      With this choice, the matrix representation of Eq.~\eqref{eq:qbit_Lindblad} in the basis of Eq.~\eqref{eq:gent} is explicitly given by
      \begin{equation}\label{eq:split}
         \begin{split}
            \gen\PP
            & = -\mat{0              & 0 \\
                      2 \Im\m L_{12} & \m L_{11} + \m L_{22}} \\
            \gen\PC
            & = \mat{0                           & 0 \\
                     \Re\m L_{13} - \Im\m L_{02} & \Re\m L_{23} + \Im\m L_{01}} \\
            \gen\CP
            & = \mat{-2\Im\m L_{23} & \Re\m L_{13} + \Im\m L_{02} \\
                     2\Im\m L_{13}  & \Re\m L_{23} - \Im\m L_{01}} \\
            \gen\CC
            & = -\mat{\m L_{22} + \m L_{33}        & \Im\m L_{03} - \Re\m L_{12} \\
                      -\Re\m L_{12} - \Im\m L_{03} & \m L_{11} + \m L_{33}}\!\text.
         \end{split}
      \end{equation}
      
      Theorem~\ref{th:semi} states that NCGD is equivalent to
      \begin{equation}
         \gen\PC \gen\CP
         = \gen\PC \gen\CC \gen\CP
         = 0\text.\label{eq:NCGD_qubit}
      \end{equation}
      In particular, the dynamics is coherence non\hyp activating, i.e., $\gen\PC = 0$, when
      \begin{equation}
         \Re\m L_{13} = \Im\m L_{02}
         \quad\land\quad
         \Re\m L_{23} = -\Im\m L_{01},
         \label{eq:non-activating}
      \end{equation}
      while it is coherence non\hyp generating, i.e., $\gen\CP = 0$, when
      \begin{equation}
         \begin{split}
            \Re\m L_{13} = -\Im\m L_{02}
            & \quad\land\quad
            \Re\m L_{23} = \Im\m L_{01} \\
            & \quad\land\quad
            \Im\m L_{13} = \Im\m L_{23} = 0\text.
         \end{split}\label{eq:non-generating}
      \end{equation}
      Observe that both coherence non\hyp activating and coherence non\hyp generating dynamics can arise from the simplest open\hyp systems dynamics, namely rank-one Pauli noise.
      For example, assuming that all contributions $\m L_{0i}$ arise solely from the Hamiltonian of the system, the following rank-one dissipators $\bar{\m L} \in \C^{3 \times 3}$,
      \begin{equation}
         \begin{aligned}
            \bar{\m L}_{\text{non-act.}}
            & = \bm{r} \bm{r}^\top\text,
            & \bm{r}
            & = \mat{\, \Im\m L_{02} & -\Im\m L_{01} & 1}^{\!\top}\!\text; \\
            \bar{\m L}_{\text{non-gen.}}
            & = \bm{s} \bm{s}^\top\text,
            & \bm{s}
            & = \mat{-\Im\m L_{02} & \Im\m L_{01} & 1}^{\!\top}\!\text,
         \end{aligned}
      \end{equation}
      give rise to coherence non\hyp activating and coherence non\hyp generating dynamics respectively.
      
      Note, however, that one can have dynamical evolutions that are capable of both generating and detecting coherence, and yet are still NCGD.
      This occurs whenever coherence is generated in an orthogonal subspace to the one where it is detected.
      In the case of qubits this happens precisely when (assuming for simplicity that the denominators involved are different from 0)
      \begin{equation}
         \frac{\Im\m L_{13}}{\Im\m L_{23}}
         = \frac{\Re\m L_{13} - \Im\m L_{02}}
                {\Re\m L_{23} + \Im\m L_{01}}
         = \frac{\Im\m L_{01} - \Re\m L_{23}}
                {\Re\m L_{13} + \Im\m L_{02}}
         \label{eq:NCGD_gen1}
      \end{equation}
      and
      \begin{equation}
         \begin{split}
            \m L_{11} - \m L_{22}
            & = \frac{(\Im\m L_{02} - \Re\m L_{13})(\Im\m L_{03} - \Re\m L_{12})}
                     {\Im\m L_{01} + \Re\m L_{23}} + {} \\
            & \peq \frac{(\Im\m L_{01} + \Re\m L_{23})(\Im\m L_{03} + \Re\m L_{12})}
                        {\Im\m L_{02}-\Re\m L_{13}}\text.
         \end{split}
         \label{eq:NCGD_gen3}
      \end{equation}
      Eq.~\eqref{eq:NCGD_gen1} is equivalent to the first condition in Eq.~\eqref{eq:NCGD_qubit}, $\gen\PC \gen\CP = 0$; the precise relationship among several coefficients of the dynamical map ensures that coherence is generated in a subspace orthogonal to that of coherence detection.
      Likewise, Eq.~\eqref{eq:NCGD_gen3} rules out the second\hyp order coupling, $\gen\PC \gen\CC \gen\CP = 0$.
      
      Let us illustrate our findings with a concrete, and prac\-ti\-cal\-ly relevant physical example; the Ramsey scheme deployed in interferometry, spectroscopy and atomic clocks.
      The simplest, non\hyp trivial case of such a scheme is that of rank-one Pauli noise in the same direction as the Hamiltonian evolution---assumed without loss of generality to be $H = \sz$---whose dynamics is given by
      \begin{equation}
         \gen[\rhoS]
         = -\ii\omega[\sz, \rhoS] + \gamma(\sz\rhoS\sz - \rhoS/2)\text,
         \label{eq:atomic_clock}
      \end{equation}
      where $\omega$ is the detuning from the reference field.
      Note that due to the normalization of the Pauli matrices, $\sz^2 = \frac{\mathbbm1}{2}$.
      In the Ramsey scheme, the atoms---approximated as qubits---are first prepared in eigenstates of $\sx$, then subjected to the evolution generated by Eq.~\eqref{eq:atomic_clock}, and subsequently measured in the eigenbasis of $\sx$.
      Choosing $\bound_{\mathrm p}(\hilb) = ( \one, \sx )$ as our incoherent basis, $\bound_{\mathrm c}(\hilb) = ( \sy, \sz )$, and using the matrix representation introduced in Eq.~\eqref{eq:gent}, the generator of Eq.~\eqref{eq:atomic_clock} can be written as
      \begin{equation}
         \gen = \mat{
            0 &             0 &             0 & 0 \\
            0 &       -\gamma & -\sqrt2\omega & 0 \\
            0 &  \sqrt2\omega &       -\gamma & 0 \\
            0 &             0 &             0 & 0
         }\!\text.
         \label{eq:genatomic}
      \end{equation}
      We can assess the CGD properties of such a setup by looking at the distance between the left- and right-hand sides of Eq.~\eqref{ncgd_def}, as 
      measured via the trace distance.
      Defining
      \begin{equation}
         \begin{split}
            p_{\pm}(t_3)
            & = \tr\bigl(\ketbra{\pm}{\pm}\, \map_{t_3, t_1} \compose \Delta[\rhoS] \bigr) \\
            q_{\pm}(t_3, t_2)
            & = \tr\bigl(\ketbra{\pm}{\pm}\, \map_{t_3, t_2} \compose \Delta \compose \map_{t_2, t_1} \compose \Delta[\rhoS] \bigr)\text,
         \end{split}
         \label{eq:prob_dists}
      \end{equation}
      Figure~\ref{fig:atomic_clock} shows the trace distance $\max_{\rhoS \in \mathcal I} \lVert\bm{p}(t_3) - \bm{q}(t_3, t_2)\rVert$ as a function of the intermediate time $t_2$ for various values of the ratio $\gamma/\omega$.
      The presence of coherence in the dynamics is most prominent half-way through the evolution
      and, indeed, it is suppressed by a stronger rate $\gamma$.
      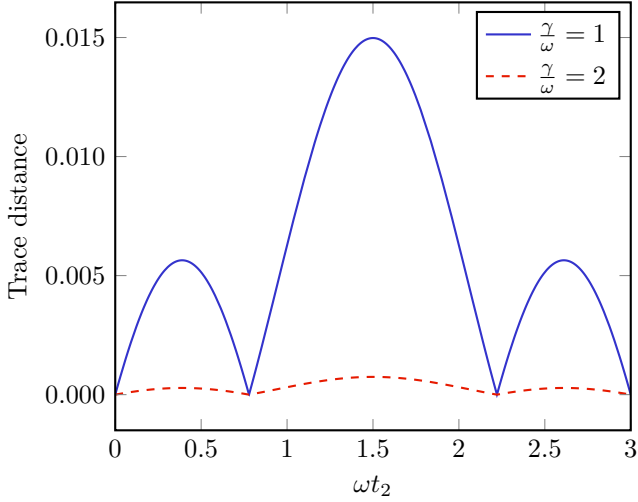
\begin{figure}
         \begin{tikzpicture}
            \begin{axis}[xlabel=$\omega t_2$, ylabel=Trace distance,
                         legend style={row sep=1mm},
                         enlarge x limits=false, scaled ticks=false,
                         yticklabel style={
                            /pgf/number format/.cd,
                            fixed, fixed zerofill, precision=3
                         }, axis on top, thick]
               \addplot[no marks, myblue] table {fig1-2.dat};
               \addlegendentry{$\frac\gamma\omega = 1$}
               \addplot[no marks, myred, dashed] table {fig1-1.dat};
               \addlegendentry{$\frac\gamma\omega = 2$}
            \end{axis}
         \end{tikzpicture}
         \caption{%
            Coherence generated and detected, as measured via the trace distance between the probability distributions of Eq.~\eqref{eq:prob_dists}, maximized over $\rhoS$, for the open\hyp system evolution described by Eq.~\eqref{eq:atomic_clock}.
            The total evolution time is fixed to $\omega t_3 = 3$ and the trace distance is plotted as a function of the intermediate time $0 \leq t_2 \leq t_3$.%
         }
         \label{fig:atomic_clock}
      \end{figure}
      
      Let us now investigate a complementary scenario in which we also include components orthogonal to the Hamiltonian in the noise.
      Specifically, consider the open\hyp system dynamics
      \begin{equation}
         \gen[\rhoS]
         = -\ii \omega [\sz, \rhoS] + \frac{1}{2}
            \sum_{i, j = 1}^{3} \gamma_{i j}\bigl([\sigma_i\rhoS, \sigma_j] + [\sigma_i, \rhoS\sigma_j]\bigr)\text,
         \label{eq:atomic_ramsey}
      \end{equation}
      where $\gamma_{i j} = \gamma_{j i}^*$ are the damping rates; still, our incoherent basis is $(\one, \sx)$.
      
      Coherence non\hyp generating dynamics corresponds to $\gamma_{1 2} = -\sqrt2 \omega$ and $\gamma_{1 3} = 0$, whereas coherence non\hyp activating dynamics is given by $\Re\gamma_{1 2} = \sqrt2 \omega$ and $\Re\gamma_{1 3} = 0$.
      
      To investigate the more general notion of NCGD dynamics, we look at the matrix representation for the corresponding generator; assuming for the sake of simplicity that
      all $\gamma_{i j}$ be real, it reduces to
      \begin{equation}
         \gen = \mat{
            0 & 0 & 0 & 0 \\
            0 & -\gamma_{2 2} - \gamma_{3 3} & -\sqrt2 \omega + \gamma_{1 2} & \gamma_{1 3} \\
            0 & \sqrt2 \omega + \gamma_{1 2} & -\gamma_{1 1} - \gamma_{3 3} & \gamma_{2 3} \\
            0 & \gamma_{1 3} & \gamma_{2 3} & -\gamma_{1 1} - \gamma_{2 2}
         }\text.
      \end{equation}
      It can be verified that
      \begin{equation}
         \begin{split}
            \gen\PC \gen\CP
            & = \gen\PC \gen\CC \gen\CP = 0 \\
            \Leftrightarrow 2\omega^2
            & = \gamma_{1 2}^2 + \gamma_{1 3}^2 \\
            {}\land \gamma_{1 3}^2 (\gamma_{2 2} - \gamma_{3 3})
            & = 2 \gamma_{1 2} \gamma_{1 3} \gamma_{2 3}
         \end{split}
      \end{equation}
      so that indeed, the CGD capabilities of the dynamical evolution depend on the damping rates $\gamma_{1 2}$ and $\gamma_{1 3}$ that mix coherent with incoherent components.
      
      \begin{figure}[t]
         \begin{tikzpicture}
            \begin{axis}[xlabel=$\omega t_2$, ylabel=Trace distance,
                         legend style={at={(xticklabel cs:.5)}, anchor=north,
                                       column sep=1mm, yshift=-7mm},
                         legend columns=2, legend cell align=left,
                         enlarge x limits=false, scaled ticks=false,
                         yticklabel style={
                            /pgf/number format/.cd,
                            fixed, fixed zerofill, precision=3
                         }, axis on top, thick]
               \addplot[no marks, myblue] table {fig2-1.dat};
               \addlegendentry{$(1, 0; 0, 0, 0)$}
               \addplot[no marks, myred, dashed] table {fig2-2.dat};
               \addlegendentry{$(1, 0; 1, 0, 0)$}
               \addplot[no marks, dgreen, densely dotted] table {fig2-3.dat};
               \addlegendentry{$(2, 0; 2, 0, 0)$}
               \addplot[no marks, mygrey, dash dot] table {fig2-4.dat};
               \addlegendentry{$(0.65, 0.65; 1, 2.1, -1)$}
            \end{axis}
         \end{tikzpicture}
         \caption{%
            Coherence generated and detected, as measured via the trace distance between the two probability distributions of Eq.~\eqref{eq:prob_dists}, maximized over $\rhoS$, for an open-system evolution described by Eq.~\eqref{eq:atomic_ramsey}. \newline
            The legend gives, in units of $\omega$, $(\gamma_{1 1} = \gamma_{2 2}, \gamma_{3 3}; \gamma_{1 2}, \gamma_{1 3}, \gamma_{2 3})$; in particular, all but the gray dash\hyp dotted line represent cases of purely orthogonal noise~\cite{Chaves2013}, since the non-zero rates are associated in Eq.~\eqref{eq:atomic_ramsey} only to Pauli operators in a direction orthogonal to the Hamiltonian $\omega \sigma_3$.
            The total evolution time is fixed to $\omega t_3 = 3$ and the trace distance is plotted as a function of the intermediate time $0 \leq t_2 \leq t_3$.
         }
         \label{fig:atomic_ramsey}
      \end{figure}
      
      Different behaviors of the CGD capability for various parameter choices are illustrated in Figure~\ref{fig:atomic_ramsey}.
      On the one hand, changing the weights of the noise components can result in even qualitatively different features of the coherences generated and detected along the evolution, characterized, for example, by different locations and number of maxima as a function of the intermediate time~$t_2$.
      
      On the other hand, rather different kinds of noise might exhibit a similar behavior.
      In fact, compare the case of pure dephasing, see Figure~\ref{fig:atomic_clock}, and the purely orthogonal noise represented by the solid blue curve in Figure~\ref{fig:atomic_ramsey}.
      The qualitative and even quantitative evolution of the coherences generated and detected is very similar in the two cases.
      This is particularly relevant since it is well known that, if we want to estimate the value of the frequency~$\omega$ via the Ramsey scheme, pure dephasing and orthogonal noise will limit the optimal achievable precision in a radically different way.
      Pure dephasing enforces the shot-noise limit~\cite{Huelga1997,Escher2011,Demkowicz2012,Haase2018}, which is typical of the classical estimation strategies~\cite{Demkowicz2015}.
      Note that this is the case even if error\hyp correction techniques are applied \cite{Sekatski2017,Demkowicz2017,Zhou2018}.
      Orthogonal noise, instead, allows for super\hyp classical precision~\cite{Chaves2013}, which can be even raised to the ultimate Heisenberg limit by means of error correction~\cite{Sekatski2017,Demkowicz2017,Zhou2018}.
      This provides us with an example of how the capability to generate coherences (in the relevant basis) and later convert them to populations has to be understood as a \emph{prerequisite} to perform tasks which rely on the advantage given by the use of quantum features.
      However, CGD in itself does not guarantee that such an advantage over any possible classical counterpart is actually achieved.
       
   \section{Conclusion}
      In this work we have shown that fulfilling a finite number of criteria is necessary and sufficient to ensure that a given quantum dynamics with time\hyp independent generator cannot generate and subsequently detect coherence.
      Importantly, these conditions are given in terms of the generator of the dynamics itself, which makes them even more convenient when one wants to characterize the evolution of a certain open system.
      In the more general case of a time\hyp dependent generator, an uncountably infinite number of conditions arises.
      We have exemplified our results for the case of a GKSL qubit dynamics, providing the defining properties of generators that give rise to coherence non\hyp generating as well as coherence non\hyp activating maps, and applied our findings to analyze the coherence\hyp generating\hyp and\hyp detecting capabilities of the open\hyp system dynamics describing a typical Ramsey protocol.
      
      Our method provides a way of assessing the interconversion between coherence and population which represents a prerequisite for the potential use of coherence as a resource in quantum information technology.
      
   \section*{Acknowledgements}
      The authors thank Andreu Riera and Philipp Strasberg for interesting discussions on various aspects of the present work.
      The authors acknowledge support from Spanish MINECO, project FIS2016-80681-P with the support of AEI/FEDER funds; the Generalitat de Catalunya, project CIRIT 2017-SGR-1127; the ERC Synergy Grant BioQ.
      MGD is supported by a doctoral studies fellowship of the Fundaci\'{o}n ``la Caixa,'' grant LCF/BQ/DE16/11570017.
      MS is supported by the Spanish MINECO, project IJCI-2015-24643. MR acknowledges partial financial support by the Baidu-UAB collaborative project `Learning of Quantum Hidden Markov Models.'
   
   \begingroup
   \RaggedRight
   \bibliographystyle{apsrev4-1}
   \bibliography{ncgd}
   \endgroup
   
   \onecolumngrid
   \appendix
   \newpage
   
   \def\one{\mathbbm1}%
   \makeatletter%
   \def\PP{_{\mathrm{p p}\vphantom j}\@ifstar{^{\vphantom j}}{}}%
   \def\PC{_{\mathrm{p c}\vphantom j}\@ifstar{^{\vphantom j}}{}}%
   \def\CP{_{\mathrm{c p}\vphantom j}\@ifstar{^{\vphantom j}}{}}%
   \def\CC{_{\mathrm{c c}\vphantom j}\@ifstar{^{\vphantom j}}{}}%
   \makeatother%
   \section{Proof of \texorpdfstring{\cref{th:semi}}{\autoref*{th:semi}}}\label{app:proof}
      \subsection{General remarks}
         In the time\hyp independent case, $\map_t = \ee^{t \gen}$, we can write NCGD as
         \begin{equation}
            \Delta \map_t \Delta^\perp \map_\tau \Delta = 0 \quad \forall t, \tau \geq 0\text,
         \end{equation}
         where we defined $\Delta^\perp \coloneqq \one - \Delta$.
         Expanding the exponential, this is equivalent to
         \begin{align}
            \Delta \sum_{n = 0}^\infty \frac{\gen^n t^n}{n!}
            \Delta^\perp \sum_{n' = 0}^\infty \frac{\gen^{n'} \tau^{n'}}{n'!}
            \Delta
            & = 0 \ \forall t, \tau \geq 0 \notag \\
            \Leftrightarrow \Delta \gen^n \Delta^\perp \gen^{n'} \Delta
            & = 0 \ \forall n, n' \in \N_0\text. \label{eqn:tip:equiv}
         \end{align}
         
         Note that for finite\hyp dimensional $\hilb$ the following equivalence holds:
         \begin{equation}
            \mbigl( \gen\PC* \gen\CC^j \gen\CP* = 0 \ \forall j \in \bigl\{ 0, \dots, \ell-1 \bigr\} \mbigr)
            \Leftrightarrow
            \mbigl( \gen\PC* \gen\CC^j \gen\CP* = 0 \ \forall j \in \N_0 \mbigr)\text,
         \end{equation}
         where $\ell= \dim\bound_{\mathrm c}(\hilb) = \dim^2\hilb - \dim\hilb = d^2-d$.
         While ``$\Leftarrow$'' is trivial, by Cayley--Hamilton \cite{Segercrantz1992}, there are coefficients $\{ \alpha_i \}_{i = 0}^\ell\subset \C$, such that
         \begin{equation}
            \sum_{i = 0}^\ell\alpha_{i\vphantom j}^{\vphantom i} \gen\CC^i = 0\text.
         \end{equation}
         Hence any power of $\gen\CC$ greater or equal to $\ell$ can be expressed as a linear combination of powers from $0$ to $\ell-1$.
         All condition formulated in the following that rely on infinite matrix powers are therefore already fulfilled if they hold up to the $(\ell-1)$\textsuperscript{\kern.4pt th} power.
      
      \subsection{Structure lemma}
         \begin{lemma}
            If $\gen\PC* \gen\CC^j \gen\CP* = 0$ for all $j < n$,
            \begin{equation}
               \Delta \gen^n
                 = \mat{\gen\PP^n &
                        \sum_{j = 1}^n \gen\PP^{j -1} \gen\PC* \gen\CC^{n - j}\\
                        0 & 0}
               \quad\text{and}\quad
               \gen^n \Delta
                 = \mat{\gen\PP^n & 0 \\
                        \sum_{j = 1}^n \gen\CC^{j -1} \gen\CP* \gen\PP^{n - j}
                        & 0}\text. \label{eqn:tip:struc}
            \end{equation}
         \end{lemma}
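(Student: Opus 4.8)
The plan is to prove the two matrix identities by a single induction on $n$, after first trivializing the role of the dephasing map. Since $\Delta$ is the orthogonal projector onto $\bound_{\mathrm p}(\hilb)$, in the block decomposition of Eq.~\eqref{eq:gent} it reads $\Delta=\mat{\mathbbm1&0\\0&0}$ and $\Delta^\perp=\mat{0&0\\0&\mathbbm1}$, so that $\Delta\gen^n$ is simply $\gen^n$ with its lower block\hyp row deleted, while $\gen^n\Delta$ is $\gen^n$ with its right block\hyp column deleted. Writing $\gen^n=\mat{P_n&Q_n\\R_n&S_n}$, the Lemma is therefore equivalent to the three block identities $P_n=\gen\PP^n$, $Q_n=\sum_{j=1}^n\gen\PP^{j-1}\gen\PC\gen\CC^{n-j}$, and $R_n=\sum_{j=1}^n\gen\CC^{j-1}\gen\CP\gen\PP^{n-j}$, with no constraint required on the coherence block $S_n$.

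The base case $n=1$ is immediate, as the top block\hyp row and left block\hyp column of $\gen^1$ are $(\gen\PP,\gen\PC)$ and $(\gen\PP,\gen\CP)^\top$. For the inductive step I would compute the row identity from the factorization $\gen^n=\gen^{n-1}\gen$ and read off the \emph{top} block\hyp row, which by the structure of matrix multiplication depends only on the top block\hyp row $(P_{n-1},Q_{n-1})$ of $\gen^{n-1}$:
\begin{equation*}
   P_n = P_{n-1}\gen\PP + Q_{n-1}\gen\CP, \qquad
   Q_n = P_{n-1}\gen\PC + Q_{n-1}\gen\CC.
\end{equation*}
Substituting the inductive hypothesis $Q_{n-1}=\sum_{j=1}^{n-1}\gen\PP^{j-1}\gen\PC\gen\CC^{n-1-j}$, every term of $Q_{n-1}\gen\CP$ contains a factor $\gen\PC\gen\CC^{n-1-j}\gen\CP$ with exponent $n-1-j\in\{0,\dots,n-2\}$, which vanishes by the hypothesis $\gen\PC\gen\CC^{j}\gen\CP=0$ for $j<n$; hence $P_n=\gen\PP^n$. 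The remaining step is a one\hyp line reindexing, in which $P_{n-1}\gen\PC=\gen\PP^{n-1}\gen\PC$ supplies the $j=n$ summand and $Q_{n-1}\gen\CC$ shifts the others, giving $Q_n=\sum_{j=1}^n\gen\PP^{j-1}\gen\PC\gen\CC^{n-j}$.

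The column identity follows by the symmetric computation, instead writing $\gen^n=\gen\,\gen^{n-1}$ and reading off the \emph{left} block\hyp column, which depends only on $(P_{n-1},R_{n-1})$: here $P_n=\gen\PP\,P_{n-1}+\gen\PC\,R_{n-1}$ with $\gen\PC\,R_{n-1}=0$ for the same reason, and $R_n=\gen\CP\,P_{n-1}+\gen\CC\,R_{n-1}$ reindexes to the claimed sum. The one point that I regard as the crux is precisely this asymmetric choice of factorization: a single recursion of the form $\gen^n=\gen\,\gen^{n-1}$ applied to the off\hyp diagonal blocks would couple in the coherence block $S_{n-1}$ through the term $\gen\PC\,S_{n-1}$, for which no closed form is available, whereas routing the row recursion through $\gen^{n-1}\gen$ and the column recursion through $\gen\,\gen^{n-1}$ keeps each computation confined to the top row, respectively left column, and eliminates $S_{n-1}$ entirely. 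A final bookkeeping check confirms that the needed exponents, $n-1-j\le n-2<n$, are all covered by the available hypotheses with room to spare.
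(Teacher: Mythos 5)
Your proof is correct and follows essentially the same route as the paper's: an induction in which the row identity is obtained from the factorization $\gen^n=\gen^{n-1}\gen$ (equivalently, right\hyp multiplying $\Delta\gen^{n-1}$ by $\gen$) and the column identity from $\gen^n=\gen\,\gen^{n-1}$, with the hypothesis $\gen\PC\gen\CC^{j}\gen\CP=0$ for $j<n$ killing exactly the cross term you highlight. The ``asymmetric factorization'' you single out as the crux is precisely what the paper does implicitly by tracking $\Delta\gen^{n}$ and $\gen^{n}\Delta$ separately, so no gap remains.
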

         \begin{proof}
            We will prove the first statement, the second one follows analogously.
            
            For $n = 0$, this is trivially true. Now let $n \mapsto n +1$.
            \begin{align*}
               \Delta \gen^{n +1}
               & = \mat{\gen\PP^n &
                            \sum_{j = 1}^n \gen\PP^{j -1} \gen\PC* \gen\CC^{n - j} \\
                         0 & 0}
                   \mat{\gen\PP & \gen\PC \\ \gen\CP & \gen\CC} \\
               & = \mat{\gen\PP^{n +1} +
                        \bgcol{gray!20!white}{\sum_{j = 1}^n \gen\PP^{j -1} \gen\PC* \gen\CC^{n - j} \gen\CP*} &
                        \gen\PP^n \gen\PC* +
                        \sum_{j = 1}^n \gen\PP^{j -1} \gen\PC* \gen\CC^{n - j +1} \\
                        0 & 0}
               \intertext{By assumption, the highlighted sum is zero. Note that in the second column, the separate term is precisely the one arising for $j = n +1$.}
               \Rightarrow \Delta \gen^{n +1}
               & = \mat{\gen\PP^{n +1} &
                        \sum_{j = 1}^{n +1} \gen\PP^{j -1} \gen\PC* \gen\CC^{n + 1 - j} \\
                        0 & 0}
               \tag*{\qedhere}
            \end{align*}
         \end{proof}
      \clearpage
      
      \subsection{Forwards direction}
         We will prove
         \begin{equation}
            \mbigl( \gen\PC* \gen\CC^j \gen\CP* = 0 \ \forall j \in \N_0 \mbigr)
            \Rightarrow \text{NCGD}\text. \label{eqn:tip:forwards}
         \end{equation}
         
         \begin{proof}
            We directly apply the structure lemma, \cref{eqn:tip:struc}, to \cref{eqn:tip:equiv}.
            The intermediate $\Delta^\perp$ removes the population\hyp to\hyp population entry.
            Hence, we obtain
            \begin{align}
               \Delta \gen^n \Delta^\perp \gen^{n'} \Delta
               & = \mat{0 & \sum_{j = 1}^n \gen\PP^{j -1} \gen\PC* \gen\CC^{n - j} \nonumber\\
                       0 & 0}
                   \mat{0 & 0 \\
                        \sum_{j' = 1}^{n'} \gen\CC^{j' -1} \gen\CP* \gen\PP^{n' - j'} & 0} \nonumber\\
               & = \mat{
                    \sum_{j = 1}^n \sum_{j' = 1}^{n'}
                       \gen\PP^{j -1} \gen\PC* \gen\CC^{n - j}
                       \gen\CC^{j' -1} \gen\CP* \gen\PP^{n' - j'} &
                       0 \\ 0 & 0\label{eq:eq}
                 }\text,
            \end{align}
            and we clearly see another $\gen\PC* \gen\CC^{\cdots} \gen\CP*$ combination, which is zero by the assumption, \cref{eqn:tip:forwards}.
         \end{proof}
         
      \subsection{Backwards direction}
         We will prove
         \begin{equation}
            \text{NCGD}
            \Rightarrow \mbigl( \gen\PC* \gen\CC^j \gen\CP* = 0 \ \forall j \in \N_0 \mbigr)\text. \label{eqn:tip:backwards}
         \end{equation}
         
         \begin{proof}
            By assumption,
            \begin{equation}
               \Delta \gen^n\Delta^\perp \gen \Delta
               = 0 \ \forall n \in \N_0\text,
               \label{eqn:tip:backwards:2}
            \end{equation}
            where we fixed $n' = 1$.
            
            Using induction we will show that if $\gen\PC* \gen\CC^j \gen\CP* = 0$ holds for all $j < n$, then it holds for $j \leq n$.
            Since this can be done for any $n \in \N_0$ and the case for $j=0$, $\gen\PC \gen\CP = 0$, is implied by \cref{eqn:tip:backwards:2} with $n=1$, the statement follows.

            By hypothesis, $\gen\PC* \gen\CC^j \gen\CP* = 0 \ \forall j < n$.
            The structure lemma, \cref{eqn:tip:struc}, therefore applies and hence\footnote{The first identity directly uses the proof of the structure lemma; since we only regard the right column, the lemma also holds for $n +1$.}
            \begin{align*}
               \Delta \gen^{n +1} \Delta^\perp
               & = \mat{0 &
                        \sum_{j = 1}^{n +1} \gen\PP^{j -1} \gen\PC* \gen\CC^{n + 1 - j} \\
                        0 & 0}\text,
               \shortintertext{so that}
               \Delta \gen^{n +1} \Delta^\perp \gen \Delta
               & = \mat{0 &
                        \sum_{j = 1}^{n +1} \gen\PP^{j -1} \gen\PC* \gen\CC^{n + 1 - j} \\
                        0 & 0}
                   \mat{0 & 0 \\
                        \gen\CP & 0} \\
               & = \mat{\sum_{j = 1}^{n +1} \gen\PP^{j -1} \gen\PC* \gen\CC^{n + 1 - j} \gen\CP* & 0 \\
                        0 & 0}\text,
               \intertext{and we can again insert the hypothesis to eliminate all terms except $j = 1$.}
               \Delta \gen^{n +1} \Delta^\perp \gen \Delta
               & = \mat{\gen\PC* \gen\CC^n \gen\CP* & 0 \\ 0 & 0}\text;
            \end{align*}%
            but this, by the assumption of NCGD must be zero, verifying the hypothesis.
         \end{proof}

      \clearpage
      \subsection{Example: NCGD up to third order}\label{app:counterEx}
         Consider a 5-level system where coherence is generated, but not detected, between levels 1 and 2 ($\gen\PC = 0$, $\gen\CP \neq 0$), and where the opposite occurs between levels 4 and 5 ($\gen\CP = 0$, $\gen\PC \neq 0$). At a first instant, coherence is transferred to levels 1 and 3, and, at the next step, to levels 4 and 5, where it is eventually detected. Such a system is described by a rank-3 noise generator with Hamiltonian
         \begin{gather}
            H
            = \frac12 \mat{
                  1 & -1 & 0 & 0&  0\\
                 -1 &  1 & 0 & 0&  0\\
                 0  &  0 & 0 & 0&  0\\
                 0  &  0 & 0 & 1&  1\\
                 0  &  0 & 0 & 1& -1
              }
            \shortintertext{and jump operators}
            J_1
            = \frac{1}{\sqrt{2}} \mat{
                 1   & -\ii & 0 &   0 &    0 \\
                 \ii &   -1 & 0 &   0 &    0 \\
                 0   &    0 & 0 &   0 &    0 \\
                 0   &    0 & 0 &   1 & -\ii \\
                 0   &    0 & 0 & \ii &   -1
              }\!\text, \quad
            J_2
            = \mat{
                 1 & 0 & 0 & 0 & 0 \\
                 0 & 0 & 0 & 0 & 0 \\
                 0 & 1 & 0 & 0 & 0 \\
                 0 & 0 & 0 & 0 & 0 \\
                 0 & 0 & 0 & 0 & 0
              }\!\text{, and} \quad
            J_3
            = \mat{
                 0 & 0 & 0 & 0 & 0\\
                 0 & 0 & 0 & 0 & 0\\
                 0 & 0 & 0 & 0 & 0\\
                 1 & 0 & 0 & 0 & 0\\
                 0 & 0 & 1 & 0 & 0
              }\text.
         \end{gather}
         As expected, such a dynamics is NCGD only up to third order in time, since it can be checked that $\gen\PC \gen\CP = 0$ and $\gen\PC \gen\CC \gen\CP = 0$, but $\gen\PC* \gen\CC^2 \gen\CP* \neq 0$.
         
   \clearpage
   \section{Proof of \texorpdfstring{\cref{ncgd_eq}}{\autoref*{ncgd_eq}}}\label{app:eq}
      \begin{proof}
         ``$\Rightarrow$'':\\
            We first show that if NCGD holds, then
            \begin{align}   \label{NCGD_eq_propagators}
               \map\PC(t_n, t_{n-1}) \map\CC(t_{n-1}, t_{n-2}) \dotsm
               \map\CC(t_{2}, t_{1}) \map\CP(t_{1}, t_{0}) = 0\ \forall t_n \geq \dotsb \geq t_0 \text.
            \end{align}
            If we assume we can expand $\gen$ at the initial time of propagation,
            \begin{align}
               \gen(t)
               & = \gen(t_0) + \order(t - t_0)\text,
               \shortintertext{we immediately obtain}
               \map(t_1, t_0)
               & = \one + \gen(t_0) (t - t_0) + \order\bigl((t - t_0)^2\bigr)\text.
            \end{align}
            Expanding at the corresponding initial times and using linear independence, we can conclude that Eq.~\eqref{NCGD_eq_propagators} then implies the assertion,
            \begin{align}
               \gen\PC(t_n) \gen\CC(t_{n-1}) \dotsm \gen\CC(t_2) \gen\CP(t_1) = 0  \text.
            \end{align}
            
            The proof of \cref{NCGD_eq_propagators} is by induction: \\
               Our hypothesis is that, under the assumption of NCGD, we can conclude
               \begin{equation}
                  \map\PC(t_k, t_{k -1}) \map\CC(t_{k -1}, t_{k -2}) \dotsm
                  \map\CC(t_{2}, t_{1}) \map\CP(t_{1}, t_{0}) = 0\ \forall t_k \geq \dotsb \geq t_0\ \forall 2 \leq k \leq n -1\text.
               \end{equation}
               The case $n = 2$ is exactly the NCGD statement. \\
               To perform the induction step, note that iterating the NCGD condition and using $\Delta^2 = \Delta$ provides us with
               \begin{gather}
                  \Delta \map(t_n, t_{n-1}) \dotsm \map(t_1, t_0) \Delta
                  = \Delta \map(t_n, t_{n-1}) \Delta \dotsm \Delta \map(t_1, t_0) \Delta \\
                  \Leftrightarrow \Delta \map(t_n, t_{n-1}) \bigl[
                         \map(t_{n -1}, t_{n -2}) \dotsm \map(t_2, t_1) -
                         \map\PP(t_{n -1}, t_{n -2}) \dotsm \map\PP(t_2, t_1)
                      \bigr] \map(t_2, t_1) \Delta
                  = 0
                  \intertext{Now using $\map \equiv \map\PP + \map\PC + \map\CP + \map\CC$ (composition of incompatible domains is defined to be zero) we obtain}
                  \begin{aligned}
                     \Leftrightarrow \Delta \map(t_n, t_{n-1}) \bigl[
                     & \map\PC(t_{n -1}, t_{n -2}) \map\CC(t_{n -2}, t_{n -3}) \dotsm
                       \map\CC(t_{2}, t_1) + {} \\
                     & \map\CC(t_{n -1}, t_{n -2}) \map\CC(t_{n -2}, t_{n -3}) \dotsm
                       \map\CP(t_{2}, t_1) + {} \\
                     & \map\CC(t_{n -1}, t_{n -2}) \map\CC(t_{n -2}, t_{n -3}) \dotsm
                       \map\CC(t_2, t_1) \\
                     \bigr]
                     & \map(t_1, t_0) \Delta
                     = 0
                  \end{aligned}
               \end{gather}
               Here, we used the hypothesis to eliminate all cross\hyp terms that contain terms of the form $\map\PC \map\CC \dotsm \map\CC \map\CP$. The first term, combined with the preceding time evolution $\map(t_1, t_0) \Delta$, and also the second term, combined with the succeeding time evolution $\Delta \map(t_n, t_{n -1})$, also vanish by the same argument. Hence, all that remains is
               \begin{equation}
                  \Delta \map(t_n, t_{n-1})
                  \map\CC(t_{n -1}, t_{n -2}) \map\CC(t_{n -2}, t_{n -3}) \dotsm
                  \map\CC(t_2, t_1) \map(t_1, t_0) \Delta = 0\text,
               \end{equation}
               closing the induction.
         
         ``$\Leftarrow$'':\\
            We look at the evolution given by $\Delta \map(t_1,t_0)\Delta$. Let us divide this evolution into small parts (of size $\d s=(t_1-t_0)/n$) $\Delta \map(t_1,t_1-\d s) \cdots \map(t_0+\d s,t_0) \Delta$. By linearly approximating each (which becomes exact in the limit $n\rightarrow\infty$), we get 
            \begin{flalign}
               \MoveEqLeft
               \begin{aligned}[t]
                  & \Delta\bigl[
                    \one + \gen\PP(t_1 - \d s) + \gen\PC(t_1 - \d s) +
                    \gen\CP(t_1 - \d s) + \gen\CC(t_1 - \d s) \bigr] \\
                  & \dotsm
                    \bigl[ \one + \gen\PP(t_0 + \d s) + \gen\PC(t_0 + \d s) +
                           \gen\CP(t_0 + \d s) + \gen\CC(t_0 + \d s) \bigr] \Delta
               \end{aligned} \notag \\*
               = {}
               & \bigl[ \Delta + \gen\PP(t_1 - \d s) + \gen\PC(t_1 - \d s) \bigr]
                 \bigl[ \one + \gen\PP(t_1 - 2 \d s) + \gen\PC(t_1 - 2 \d s) +
                        \gen\CP(t_1 - 2 \d s) + \gen\CC(t_1 - 2 \d s) \bigr] \notag\\*
               & \dotsm
                 \bigl[ \one + \gen\PP(t_0 + 2 \d s) + \gen\PC(t_0 + 2 \d s) +
                        \gen\CP(t_0 + 2 \d s) + \gen\CC(t_0 + 2 \d s) \bigr]
                 \bigl[ \Delta + \gen\PP(t_0 + \d s) + \gen\CP(t_0 + \d s) \bigr] \\
               = {}
               & \bigl[ \Delta + \gen\PP(t_1 - \d s) \bigr] \dotsm
                 \bigl[ \Delta + \gen\PP(t_0 + \d s) \bigr]\text,
            \end{flalign}
            where we used that by assumption any term of the form $\gen\PC(r_j)  \gen\CC(r_{j-1}) \dotsm \gen\CC(r_{k+1}) \gen\CP(r_k)$ is zero.
            This equality means that we get the same total population transfer whether we dephase at every point in time or we do not. Inserting a dephasing at the wanted time and using the equality backwards we get NCGD.
      \end{proof}
\end{document}